\def\BibTeX{{\rm B\kern-.05em{\sc i\kern-.025em b}\kern-.08em
    T\kern-.1667em\lower.7ex\hbox{E}\kern-.125emX}}
\theoremstyle{definition}
\newtheorem{definition}{Definition}
\newtheorem{proposition}{Proposition}
\begin{document}

\title{Co-clustering Vertices and Hyperedges via\\Spectral Hypergraph Partitioning
\thanks{
This work was supported by NSF under award CCF-2008555.
B. Li was partially supported by the Ken Kennedy Institute 2020/21 Ken Kennedy-Cray Graduate Fellowship.
We also acknowledge the support of NVIDIA Corporation.
E-mails: \{yz126, boning.li, segarra\}@rice.edu }}

\author{
Yu Zhu, Boning Li, Santiago Segarra \\
\textit{Rice University, USA}
}

\maketitle

\begin{abstract}
We propose a novel method to co-cluster the vertices and hyperedges of hypergraphs with edge-dependent vertex weights (EDVWs).
In this hypergraph model, the contribution of every vertex to each of its incident hyperedges is represented through an edge-dependent weight, conferring the model higher expressivity than the classical hypergraph.
In our method, we leverage random walks with EDVWs to construct a hypergraph Laplacian and use its spectral properties to embed vertices and hyperedges in a common space.
We then cluster these embeddings to obtain our proposed co-clustering method, of particular relevance in applications requiring the simultaneous clustering of data entities and features. 
Numerical experiments using real-world data demonstrate the effectiveness of our proposed approach in comparison with state-of-the-art alternatives.
\end{abstract}

\begin{IEEEkeywords}
Hypergraphs, co-clustering, Laplacian, spectral partitioning, edge-dependent vertex weights. 
\end{IEEEkeywords}


\section{Introduction}\label{sec-intro}

Clustering, a fundamental task in data mining and machine learning, aims to divide a set of entities into several groups such that entities in the same group are more similar to each other than to those in other groups. 
In \emph{graph} clustering or partitioning, the entities are modeled as the vertices of a graph and their similarities are encoded in the edges. 
In this setting, the goal is to group the vertices into clusters such that there are more edges within each cluster than across clusters. 

While graphs serve as a popular tool to model \emph{pairwise} relationships, in many real world applications the entities engage in more complicated, \emph{higher-order} relationships.
For example, in coauthorship networks~\cite{han2009understanding} more than two authors can interact in writing a manuscript. 
Hypergraphs can be used to represent such datasets, where the notion of an edge is extended to a hyperedge that can connect more than two vertices.
Existing research on hypergraph partitioning mainly follows two directions. 
One is to project a hypergraph onto a proxy graph via hyperedge expansion and then graph partitioning methods can be directly leveraged \cite{agarwal2005beyond, zhou2007learning, agarwal2006higher}. 
Another one is to represent hypergraphs using tensors and adopt tensor decomposition algorithms \cite{shashua2006multi, ghoshdastidar2015provable, chen2017fiedler, ke2019community}.

To better accommodate hypergraphs for the representation of real-world data, several extensions over the classical hypergraph have been recently proposed~\cite{li2017inhomogeneous, baytas2018heterogeneous, chitra2019, hayashi2020, schaub2021signal}. 
These more elaborate models consider different types of vertices or hyperedges, or different levels of relations. 
In this paper, we consider edge-dependent vertex weights (EDVWs)~\cite{chitra2019}, which can be used to reflect the different importance or contribution of vertices in a hyperedge.
This model is highly relevant in practice.
For example, an e-commerce system can be modeled as a hypergraph with EDVWs where users and products are respectively modeled as vertices and hyperedges, and EDVWs represent the quantity of a product in a user's shopping basket~\cite{li2018tail}. 
EDVWs can also be used to model the relevance of a word to a document in text mining~\cite{hayashi2020}, the probability of an image pixel belonging to a segment in image segmentation~\cite{ding2010interactive}, and the author positions in a coauthorship or citation network~\cite{chitra2019}, to name a few.

A large portion of clustering algorithms focus on one-way clustering, i.e., clustering data entities based on their features and, in the hypergraph setting, clustering vertices based on hyperedges. 
Indeed, in \cite{hayashi2020}, a hypergraph partitioning algorithm was proposed to cluster the vertices in a hypergraph with EDVWs. 
However, it is more desirable to simultaneously cluster (or co-cluster) both vertices and hyperedges in many applications including text mining~\cite{dhillon2001co, dhillon2003information}, product recommendation~\cite{vlachos2014improving}, and bioinformatics~\cite{cheng2000biclustering, cho2004minimum}. 
Moreover, co-clustering can leverage the benefit of exploiting the duality between data entities and features to effectively deal with high-dimensional and sparse data~\cite{dhillon2003information,long2005co}. 

In this paper, we study the problem of co-clustering vertices and hyperedges in a hypergraph with EDVWs. 
Our contributions can be summarized as follows:\\
(i)~We define a Laplacian for hypergraphs with EDVWs through random walks on vertices and hyperedges and show its equivalence to the Laplacian of a specific digraph obtained via a modified star expansion of the hypergraph.\\
(ii)~We propose a spectral hypergraph co-clustering method based on the proposed hypergraph Laplacian.\\
(iii)~We validate the effectiveness of the proposed method via numerical experiments on real-world datasets. 

Notation: 
The entries of a matrix $\mathbf{X}$ are denoted by $X_{ij}$ or $\mathbf{X}(i,j)$. 
Operations $(\cdot)^\top$ and $\mathrm{Tr}(\cdot)$ represent transpose and trace, respectively.
$\mathbf{1}$ and $\mathbf{I}$ refer to the all-ones vector and the identity matrix, where the sizes are clear from context.
$\mathbf{I}_{N}$ and $\mathbf{0}_{N\times M}$ refer to the identity matrix of size $N\times N$ and the all-zero matrix of size $N\times M$.
$\mathrm{diag}(\mathbf{x})$ denotes a diagonal matrix whose diagonal entries are given by the vector $\mathbf{x}$.
Finally, $[\mathbf{X};\mathbf{Y}]$ represents the matrix obtained by {vertically} concatenating two matrices $\mathbf{X}$ and $\mathbf{Y}$, while $[\mathbf{X}, \mathbf{Y}]$ denotes horizontal concatenation.


\section{Preliminaries}\label{sec-pre}

\subsection{Hypergraphs with edge-dependent vertex weights}\label{ssec-model}

Hypergraphs are generalizations of graphs where edges can connect more than two vertices.
In this paper, we consider the hypergraph model with EDVWs~\cite{chitra2019} as defined next.

\begin{definition}
A hypergraph $\mathcal{H}=(\mathcal{V},\mathcal{E},\omega,\gamma)$ with EDVWs consists of a set of vertices $\mathcal{V}$, a set of hyperedges $\mathcal{E}$ where a hyperedge is a subset of the vertex set, a weight $\omega(e)$ for every hyperedge $e\in\mathcal{E}$, and a weight $\gamma_e(v)$ for every hyperedge $e\in\mathcal{E}$ and every vertex $v\in e$. 
\end{definition}

The difference between the above hypergraph model and the typical hypergraph model considered in most existing papers is the introduction of the EDVWs $\{\gamma_e(v)\}$. 
The motivation is to enable the model to describe the cases when the vertices in the same hyperedge contribute differently to this hyperedge. 
For example, in a coauthorship network, every author (vertex) in general has a different degree of contribution to a paper (hyperedge), usually represented by the order of the authors.
This information is lost in traditional hypergraph models but it can be easily encoded through EDVWs.

For convenience, let $\mathbf{R}\in\mathbb{R}^{|\mathcal{E}|\times|\mathcal{V}|}$ collect edge-dependent vertex weights, with $R_{ev}=\gamma_e(v)$ if $v\in e$ and $0$ otherwise. 
Also, let $\mathbf{W}\in\mathbb{R}^{|\mathcal{V}|\times|\mathcal{E}|}$ collect hyperedge weights, with $W_{ve}=\omega(e)$ if $v\in e$ and $0$ otherwise. 
Throughout the paper we assume that the hypergraph is connected.

\subsection{Spectral graph partitioning}\label{ssec-sgp}

Given an undirected graph $\mathcal{G}$ with $N$ vertices, the goal of graph partitioning is to divide its vertex set into $k$ disjoint subsets (clusters) $\mathcal{S}_1,\cdots,\mathcal{S}_k$ such that there are more (heavily weighted) edges inside a cluster and few edges across clusters, while these clusters are also balanced in size.\footnote{Although there are different variations of the graph partitioning problem~\cite{bulucc2016recent}, this is the one that we adopt in this paper.}

To postulate this problem, let $\mathbf{A}_g$, $\mathbf{D}_g=\mathrm{diag}(\mathbf{A}_g\mathbf{1})$, and $\mathbf{L}_g=\mathbf{D}_g-\mathbf{A}_g$ denote the weighted adjacency matrix, the degree matrix, and the combinatorial graph Laplacian, respectively. 
Denote by $\mathcal{S}$ a subset of vertices and $\mathcal{S}^c$ its complement.
Then, the cut between $\mathcal{S}$ and $\mathcal{S}^c$ is defined as the sum of weights of edges across them whereas the volume of $\mathcal{S}$ is defined as the sum of weighted degrees of vertices in $\mathcal{S}$. 
More formally, we have
\begin{align*}
\mathrm{cut}(\mathcal{S}, \mathcal{S}^c) =\!\!\! \sum_{u\in\mathcal{S},v\in\mathcal{S}^c} \!\! \mathbf{A}_g(u,v), \,\,
\mathrm{vol}(\mathcal{S})=\sum_{u\in\mathcal{S}}\mathbf{D}_g(u,u).
\end{align*}
One well-known measure for evaluating the partition is normalized cut (Ncut) \cite{shi2000normalized} defined as
\begin{align*}
\mathrm{Ncut}(\mathcal{S}_1,\cdots,\mathcal{S}_k) = \sum_{i=1}^k \frac{\mathrm{cut}(\mathcal{S}_i,\mathcal{S}_i^c)}{\mathrm{vol}(\mathcal{S}_i)}.
\end{align*}
If we define an $N\times k$ matrix $\mathbf{Q}$ whose entries are 
\begin{align}\label{E:indicator}
Q_{vi} = \begin{cases} 
1/\sqrt{\mathrm{vol}(\mathcal{S}_i)}&\mbox{if } v\in\mathcal{S}_i, \\
0 &\mbox{otherwise},
\end{cases} 
\end{align} 
then it can be shown that $\mathrm{Ncut}(\mathcal{S}_1,\cdots,\mathcal{S}_k) = \mathrm{Tr}(\mathbf{Q}^{\top}\mathbf{L}_g\mathbf{Q})$. 
Thus, we can write the problem of minimizing the Ncut as
\begin{align}\label{E:discrete}
\min_{\mathcal{S}_1,\cdots,\mathcal{S}_k} \mathrm{Tr}(\mathbf{Q}^{\top}\mathbf{L}_g\mathbf{Q})\quad\text{ s.t. } \mathbf{Q}^{\top}\mathbf{D}_g\mathbf{Q} = \mathbf{I}, \,\, \mathbf{Q} \text{ as in } \eqref{E:indicator}. 
\end{align}
The spectral graph partitioning method~\cite{shi2000normalized} relaxes~\eqref{E:discrete} to a continuous optimization problem by ignoring its second constraint.
The solution to the relaxed problem is the $k$ generalized eigenvectors of $\mathbf{L}_g\mathbf{q}_i=\lambda_i\mathbf{D}_g\mathbf{q}_i$ associated with the $k$ smallest eigenvalues. 
Then, $k$-means~\cite{lloyd1982least} can be applied to the rows of $\mathbf{Q}=[\mathbf{q}_1,\cdots,\mathbf{q}_k]$ to obtain the desired clusters $\mathcal{S}_1,\cdots,\mathcal{S}_k$.


\section{The Proposed Hypergraph Co-clustering}\label{sec-proposed}

\subsection{Star expansion and hypergraph Laplacians}\label{ssec-hgl}

We project the hypergraph $\mathcal{H}$ onto a directed graph $\mathcal{G}_s=(\mathcal{V}_s,\mathcal{E}_s)$ via the so-called star expansion, where we replace each hyperedge with a star graph. 
More precisely, we introduce a new vertex for every hyperedge $e\in\mathcal{E}$, thus $\mathcal{V}_s=\mathcal{V}\cup\mathcal{E}$. 
The graph $\mathcal{G}_s$ connects each new vertex representing a hyperedge $e$ with each vertex $v\in e$ through two directed edges (one in each direction) that we weigh differently, as explained next.

We consider a random walk on the hypergraph $\mathcal{H}$ (equivalently, on $\mathcal{G}_s$) in which we walk from a vertex $v$ to a hyperedge $e$ that contains $v$ with probability proportional to $\omega(e)$, and then walk from $e$ to a vertex $u$ contained in $e$ with probability proportional to $\gamma_e(u)$. 
We define two matrices $\mathbf{P}_{\mathcal{V}\to\mathcal{E}}\in\mathbb{R}^{|\mathcal{V}|\times|\mathcal{E}|}$ and $\mathbf{P}_{\mathcal{E}\to\mathcal{V}}\in\mathbb{R}^{|\mathcal{E}|\times|\mathcal{V}|}$ to collect the transition probabilities from $\mathcal{V}$ to $\mathcal{E}$ and from $\mathcal{E}$ to $\mathcal{V}$, respectively.
The corresponding entries are given by $\mathbf{P}_{\mathcal{V}\to\mathcal{E}}(v,e)=W_{ve}/\sum_{e'}W_{ve'}$ and $\mathbf{P}_{\mathcal{E}\to\mathcal{V}}(e,v)=R_{ev}/\sum_{v'}R_{ev'}$. 
Then, the transition probability matrix associated with a random walk on $\mathcal{G}_s$ can be written as 
\begin{align*}
\mathbf{P} = 
\left[\begin{matrix}
\mathbf{0}_{|\mathcal{V}|\times|\mathcal{V}|} & 	 \mathbf{P}_{\mathcal{V}\to\mathcal{E}} \\
\mathbf{P}_{\mathcal{E}\to\mathcal{V}} & \mathbf{0}_{|\mathcal{E}|\times|\mathcal{E}|}
\end{matrix}\right].
\end{align*}
When the hypergraph $\mathcal{H}$ is connected, the graph $\mathcal{G}_s$ is strongly connected, thus the random walk defined by $\mathbf{P}$ is irreducible (every vertex can reach any vertex). 
Moreover, it is periodic since $\mathcal{G}_s$ is bipartite and once we start at a vertex $v$, we can only return to $v$ after even steps.

It is well known that a random walk has a unique stationary distribution if it is irreducible and aperiodic~\cite{chung2005laplacians}. 
To fix the above periodicity problem, we introduce self-loops to $\mathcal{G}_s$ and define a new transition probability matrix $\mathbf{P}_{\alpha}=(1-\alpha)\mathbf{I}+\alpha\mathbf{P}$ where $0<\alpha<1$. Matrix $\mathbf{P}_{\alpha}$ defines a random walk (the so-called lazy random walk) where at each discrete time point we take a step of the original random walk with probability $\alpha$ and stay at the current vertex with probability $1-\alpha$. 
The stationary distribution $\pmb{\pi}$ of the random walk is the all-positive dominant left eigenvector of $\mathbf{P}_{\alpha}$, i.e. $\pmb{\pi}^{\top}\mathbf{P}_{\alpha}=\pmb{\pi}^{\top}$, scaled to satisfy $\|\pmb{\pi}\|_1=1$. Notice that different choices of $\alpha$ lead to the same $\pmb{\pi}$.

\begin{table*}
\caption{Summary of datasets considered.}\label{table-datasets}	
\centering
\begin{tabular}{ccccc}
\hline
Datasets & Subsets & \# documents & \# words & Classes \\
\hline
\multirow{2}*{20 Newsgroups} 
& Dataset 1 & 3,863 & 2,000 & comp.os.ms-windows.misc, rec.autos, sci.crypt, talk.politics.guns \\
& Dataset 2 & 5,663 & 2,000 & alt.atheism, comp.graphics, misc.forsale, rec.sport.hockey, sci.electronics, talk.politics.mideast \\
\hline
\multirow{2}*{RCV1}          
& Dataset 3 & 4,000 & 2,000 & CCAT, ECAT, GCAT, MCAT \\
& Dataset 4 & 8,000 & 2,000 & C15, C18, E31, E41, GCRIM, GDIS, M11, M14 \\
\hline 
\end{tabular}
\end{table*}

Given $\mathbf{P}_{\alpha}$ and $\mathbf{\Phi} = \mathrm{diag}(\pmb{\pi})$, we generalize the directed combinatorial Laplacian $\mathbf{L}$ and the normalized Laplacian $\mathcal{L}$ \cite{chung2005laplacians} to hypergraphs as follows
\begin{align}
\mathbf{L} &= \mathbf{\Phi} - \frac{\mathbf{\Phi}\mathbf{P}_{\alpha} + \mathbf{P}_{\alpha}^{\top}\mathbf{\Phi}}{2}, \label{E:L1}\\
\mathcal{L} &= \mathbf{\Phi}^{-\frac{1}{2}}\mathbf{L}\mathbf{\Phi}^{-\frac{1}{2}} = \mathbf{I} - \frac{\mathbf{\Phi}^{\frac{1}{2}}\mathbf{P}_{\alpha}\mathbf{\Phi}^{-\frac{1}{2}} + \mathbf{\Phi}^{-\frac{1}{2}}\mathbf{P}_{\alpha}^{\top}\mathbf{\Phi}^{\frac{1}{2}}}{2}. \label{E:L2}
\end{align}
It can be readily verified that \eqref{E:L1} and \eqref{E:L2} are equal to the combinatorial and normalized Laplacians of the undirected graph defined by the following weighted adjacency matrix
\begin{align}\label{E:hg_A}
\mathbf{A} = \frac{\mathbf{\Phi}\mathbf{P}_{\alpha} + \mathbf{P}_{\alpha}^{\top}\mathbf{\Phi}}{2},
\end{align}
where $\mathbf{\Phi}=\mathrm{diag}(\mathbf{A1})$ is the corresponding degree matrix.

\subsection{Spectral hypergraph partitioning}

We can leverage the hypergraph Laplacians proposed in Section~\ref{ssec-hgl} to apply spectral \emph{graph} partitioning methods (as introduced in Section~\ref{ssec-sgp}) to \emph{hypergraphs}. 
More precisely, we compute the $k$ generalized eigenvectors $\mathbf{U}=[\mathbf{u}_1,\cdots,\mathbf{u}_k]$ of the generalized eigenproblem $\mathbf{Lu}=\lambda\mathbf{\Phi u}$ associated with the $k$ smallest eigenvalues, and then cluster the rows of $\mathbf{U}$ using $k$-means. 
Note that $\mathbf{Lu}=\lambda\mathbf{\Phi u}$ can be written as $\mathbf{\Phi}^{-\frac{1}{2}}\mathbf{L}\mathbf{\Phi}^{-\frac{1}{2}}(\mathbf{\Phi}^{\frac{1}{2}}\mathbf{u})=\lambda\mathbf{\Phi}^{\frac{1}{2}}\mathbf{u}$, implying that $(\lambda,\mathbf{\Phi}^{\frac{1}{2}}\mathbf{u})$ is an eigenpair of the normalized Laplacian $\mathcal{L}$. Hence, if $\mathbf{v}$ is an eigenvector of $\mathcal{L}$, then $\mathbf{u}=\mathbf{\Phi}^{-\frac{1}{2}}\mathbf{v}$.

Since obtaining eigenvectors can be computationally challenging, we show next how to compute the eigenvectors of $\mathcal{L}$ from a smaller size matrix. 
To do this, let us first rewrite $\mathbf{P}_{\alpha}$ and $\mathbf{\Phi}$ as 
\begin{align*}
\mathbf{P}_{\alpha} = \left[
\begin{matrix}
(1-\alpha)\mathbf{I}_{|\mathcal{V}|} \!&\! 	 \alpha\mathbf{P}_{\mathcal{V}\to\mathcal{E}} \\
\alpha\mathbf{P}_{\mathcal{E}\to\mathcal{V}} \!&\! (1-\alpha)\mathbf{I}_{|\mathcal{E}|}
\end{matrix}
\right], \,\,
\mathbf{\Phi} = \left[
\begin{matrix}
\mathbf{\Phi}_{\mathcal{V}} \!&\! \mathbf{0}_{|\mathcal{V}|\times|\mathcal{E}|} \\
\mathbf{0}_{|\mathcal{E}|\times|\mathcal{V}|} \!&\! \mathbf{\Phi}_{\mathcal{E}}
\end{matrix}
\right].
\end{align*}

\begin{proposition}\label{prop}
Define the following matrix 
\begin{align}
\bar{\mathbf{A}} = \frac{1}{2} \left( \mathbf{\Phi}_{\mathcal{V}}^{\frac{1}{2}}\mathbf{P}_{\mathcal{V}\to\mathcal{E}}\mathbf{\Phi}_{\mathcal{E}}^{-\frac{1}{2}} + \mathbf{\Phi}_{\mathcal{V}}^{-\frac{1}{2}}\mathbf{P}_{\mathcal{E}\to\mathcal{V}}^{\top}\mathbf{\Phi}_{\mathcal{E}}^{\frac{1}{2}} \right),
\end{align}
and denote by $\bar{\mathbf{u}}$ and $\bar{\mathbf{v}}$ the left and right singular vectors of $\bar{\mathbf{A}}$ associated with the singular value $\bar{\lambda}$, respectively. 
Then, the vector $\mathbf{v}=[\bar{\mathbf{u}}^{\top}, \bar{\mathbf{v}}^{\top}]^{\top}$ is the eigenvector of $\mathcal{L}$ associated with the eigenvalue $\lambda=\alpha(1-\bar{\lambda})$.
\end{proposition}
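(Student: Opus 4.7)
The plan is to exploit the block bipartite structure: under the star expansion, $\mathcal{G}_s$ has edges only between $\mathcal{V}$ and $\mathcal{E}$, so after removing the self-loop contribution from $\mathbf{P}_\alpha$, the normalized Laplacian should reduce to a block anti-diagonal symmetric matrix, whose spectral decomposition is governed by the SVD of the off-diagonal block. I would aim to show explicitly that
\[
\mathcal{L} \;=\; \alpha\,\mathbf{I} \;-\; \alpha
\begin{bmatrix}
\mathbf{0} & \bar{\mathbf{A}} \\
\bar{\mathbf{A}}^{\top} & \mathbf{0}
\end{bmatrix},
\]
and then read off eigenpairs from the SVD of $\bar{\mathbf{A}}$.

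First, I would substitute the block forms of $\mathbf{P}_\alpha$ and $\mathbf{\Phi}$ into \eqref{E:L2} and compute $\mathbf{\Phi}^{1/2}\mathbf{P}_\alpha\mathbf{\Phi}^{-1/2}$ blockwise. The diagonal blocks of $\mathbf{P}_\alpha$ are scalar multiples of identity, so conjugating by $\mathbf{\Phi}^{\pm 1/2}$ leaves them as $(1-\alpha)\mathbf{I}_{|\mathcal{V}|}$ and $(1-\alpha)\mathbf{I}_{|\mathcal{E}|}$; the off-diagonal blocks become $\alpha\,\mathbf{\Phi}_{\mathcal{V}}^{1/2}\mathbf{P}_{\mathcal{V}\to\mathcal{E}}\mathbf{\Phi}_{\mathcal{E}}^{-1/2}$ and $\alpha\,\mathbf{\Phi}_{\mathcal{E}}^{1/2}\mathbf{P}_{\mathcal{E}\to\mathcal{V}}\mathbf{\Phi}_{\mathcal{V}}^{-1/2}$. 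An analogous calculation for $\mathbf{\Phi}^{-1/2}\mathbf{P}_\alpha^{\top}\mathbf{\Phi}^{1/2}$ yields the same diagonal blocks together with the transposes of the off-diagonals, now in the symmetric positions. Averaging the two and subtracting from $\mathbf{I}$ cancels the $(1-\alpha)\mathbf{I}$ contribution on the diagonal (leaving $\alpha\mathbf{I}$) and produces exactly the matrix $\bar{\mathbf{A}}$ in the upper-right block and $\bar{\mathbf{A}}^{\top}$ in the lower-left block, as advertised.

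Next, I would invoke the well-known correspondence between eigenpairs of a symmetric block anti-diagonal matrix and singular triples of its off-diagonal block. Concretely, if $\bar{\mathbf{A}}\bar{\mathbf{v}} = \bar{\lambda}\bar{\mathbf{u}}$ and $\bar{\mathbf{A}}^{\top}\bar{\mathbf{u}} = \bar{\lambda}\bar{\mathbf{v}}$, then multiplying the bracketed block matrix by $[\bar{\mathbf{u}}^{\top},\bar{\mathbf{v}}^{\top}]^{\top}$ gives $\bar{\lambda}[\bar{\mathbf{u}}^{\top},\bar{\mathbf{v}}^{\top}]^{\top}$, so this concatenated vector is an eigenvector of the block matrix with eigenvalue $\bar{\lambda}$. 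Combined with $\mathcal{L}=\alpha\mathbf{I}-\alpha(\cdot)$, this yields the eigenvalue $\alpha(1-\bar{\lambda})$ and the eigenvector $\mathbf{v}=[\bar{\mathbf{u}}^{\top},\bar{\mathbf{v}}^{\top}]^{\top}$ of $\mathcal{L}$, finishing the proof.

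The work is essentially bookkeeping rather than anything deep; the only place to be careful is tracking how the factors of $\mathbf{\Phi}_{\mathcal{V}}^{\pm 1/2}$ and $\mathbf{\Phi}_{\mathcal{E}}^{\pm 1/2}$ distribute in the two conjugations so that the off-diagonal blocks of $\mathcal{L}$ coincide exactly (up to the factor $\alpha$ and a sign) with $\bar{\mathbf{A}}$ as defined in the proposition, and to observe that the $(1-\alpha)$ contributions cancel cleanly so that the final eigenvalue has the simple form $\alpha(1-\bar{\lambda})$ independent of the off-diagonal structure.
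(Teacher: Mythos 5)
Your proposal is correct and follows essentially the same route as the paper: both hinge on rewriting $\mathcal{L}$ in the block form of \eqref{E:L2_rewrite} and then invoking the standard correspondence between the SVD of $\bar{\mathbf{A}}$ and the eigendecomposition of the symmetric block anti-diagonal matrix. The only differences are cosmetic: you spell out the blockwise conjugation that the paper asserts without computation, and you argue the forward direction (singular triple $\Rightarrow$ eigenpair), which matches the statement directly and sidesteps the $\lambda<\alpha$ sign caveat the paper needs for its converse argument.
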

\begin{proof}
Let us rewrite $\mathcal{L}$ as
\begin{align}\label{E:L2_rewrite}
\mathcal{L} = \alpha\left[
\begin{matrix}
	\mathbf{I}_{|\mathcal{V}|} & - \bar{\mathbf{A}} \\
	-\bar{\mathbf{A}}^{\top} & \mathbf{I}_{|\mathcal{E}|}
\end{matrix}
\right].	
\end{align}
Split its eigenvector into two parts $\mathbf{v}=[\mathbf{v}_{\mathcal{V}}^{\top}, \mathbf{v}_{\mathcal{E}}^{\top}]^{\top}$ where $\mathbf{v}_{\mathcal{V}}$ and $\mathbf{v}_{\mathcal{E}}$ respectively have length $|\mathcal{V}|$ and $|\mathcal{E}|$. Then we have 
\begin{align*}
\alpha\left[\begin{matrix}
	\mathbf{I}_{|\mathcal{V}|} & - \bar{\mathbf{A}} \\
	-\bar{\mathbf{A}}^{\top} & \mathbf{I}_{|\mathcal{E}|}
\end{matrix}\right] 
\left[
\begin{matrix}
\mathbf{v}_{\mathcal{V}} \\
\mathbf{v}_{\mathcal{E}}
\end{matrix}
\right] = \lambda
\left[
\begin{matrix}
\mathbf{v}_{\mathcal{V}} \\
\mathbf{v}_{\mathcal{E}}
\end{matrix}
\right],
\end{align*}
and it follows that 
\begin{align*}
\bar{\mathbf{A}}\mathbf{v}_{\mathcal{E}} =(1-\alpha^{-1}\lambda)\mathbf{v}_{\mathcal{V}},
\qquad \bar{\mathbf{A}}^{\top}\mathbf{v}_{\mathcal{V}}=(1-\alpha^{-1}\lambda)\mathbf{v}_{\mathcal{E}}.
\end{align*}
When $1-\alpha^{-1}\lambda>0$, i.e. $\lambda<\alpha$, $\mathbf{v}_{\mathcal{V}}$ and $\mathbf{v}_{\mathcal{E}}$ are respectively the left and right singular vectors of $\bar{\mathbf{A}}$ and $1-\alpha^{-1}\lambda$ is the corresponding singular value.
\end{proof}

Based on Proposition~\ref{prop}, our proposed spectral hypergraph co-clustering algorithm is given by the following steps:\\
1) Compute the $k$ left and right singular vectors of $\bar{\mathbf{A}}$ associated with the $k$ largest singular values, denoted by $\bar{\mathbf{U}}\in\mathbb{R}^{|\mathcal{V}|\times k}$ and $\bar{\mathbf{V}}^{|\mathcal{E}|\times k}$, respectively.\\
2) Leverage Proposition~\ref{prop} to form $\mathbf{U} = [\mathbf{\Phi}_{\mathcal{V}}^{-\frac{1}{2}}\bar{\mathbf{U}}; \mathbf{\Phi}_{\mathcal{E}}^{-\frac{1}{2}}\bar{\mathbf{V}}]$. \\
3) (Optional) Normalize the rows of $\mathbf{U}$ to have unit norm. \\
4) Apply $k$-means to the rows of $\mathbf{U}$ (or its normalized version).

The optional normalization step above is inspired by the spectral partitioning algorithm proposed in~\cite{ng2002spectral}.
In our next section, we denote the variant of our algorithm without normalization as s-spec-1 whereas the one that implements the third step above is denoted as s-spec-2.

\vspace{1mm}
\noindent 
\emph{How to choose parameter $\alpha$?} From Proposition~\ref{prop} and \eqref{E:L2_rewrite} we can see that the choice of $\alpha$ affects the eigenvalues of $\mathcal{L}$ but does not change its eigenvectors (or their order).
Hence, the proposed spectral clustering method is independent of $\alpha$.


\begin{figure*}
\centering
\includegraphics[scale=0.4]{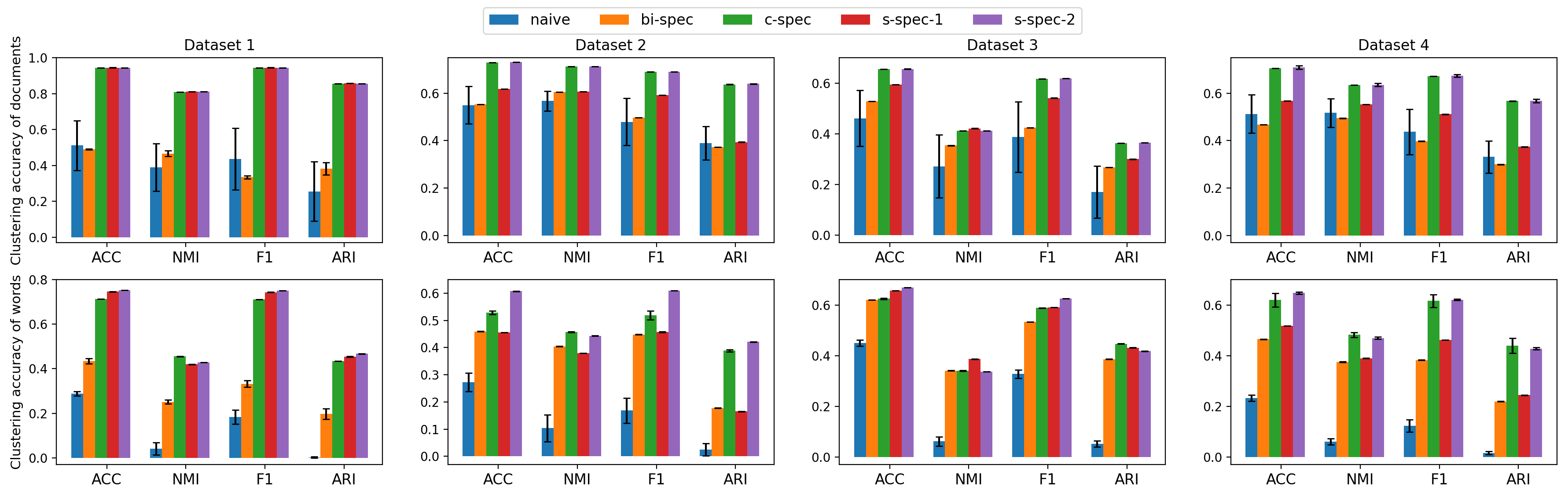}	
\caption{Performance comparison of clustering algorithms. The two rows respectively show the clustering accuracy of documents and words. Each column corresponds to one dataset.}
\label{fig-acc}
\vspace{-0.8em}
\end{figure*}

\section{Experiments}


In this section, we evaluate the performance of the proposed methods via numerical experiments.\footnote{The code needed to replicate the numerical experiments presented in this paper can be found at \url{https://github.com/yuzhu2019/hypergraph_cocluster}.}
We consider two widely used real-world text datasets: 20 Newsgroups\footnote{\url{http://qwone.com/~jason/20Newsgroups/}} and Reuters Corpus Volume 1 (RCV1) \cite{lewis2004rcv1}. 
Both of them contain documents in different categories. 
We extract two subsets of documents from each of them to build datasets of different levels of difficulty (datasets 1 and 3 are easier than datasets 2 and 4; see Table~\ref{table-datasets}). 
We consider the $2,\!000$ most frequent words in the corpus after removing stop words and words appearing in $>20\%$ and $<0.2\%$ of the documents. 


To model text datasets using hypergraphs with EDVWs, we follow the procedure in~\cite{hayashi2020}. 
More precisely, we consider documents as vertices and words as hyperedges.
A document (vertex) belongs to a word (hyperedge) if the word appears in the document. 
The EDVWs (the entries in $\mathbf{R}$) are taken as the corresponding tf-idf (term frequency–inverse document frequency) values, which reflect how relevant a word is to a document in a collection of documents. 
The weight associated with a hyperedge is computed as the standard deviation of the entries in the corresponding row of $\mathbf{R}$.   


We compare the proposed methods (s-spec-1 and s-spec-2) with the following three methods. 
(i)~The naive method (naive): We run $k$-means on the columns and the rows of the tf-idf matrix $\mathbf{R}$ to cluster documents and words, respectively.  
(ii)~Bipartite spectral graph partitioning (bi-spec)~\cite{dhillon2001co}: The dataset is modeled as an (undirected) bipartite graph between documents and words, then a spectral graph partitioning algorithm is applied; see Section~\ref{ssec-sgp}. 
~
(iii)~Clique expansion (c-spec, Algorithm 1 in \cite{hayashi2020}): This method projects the hypergraph with EDVWs onto a proxy graph via the so-called clique expansion, then applies a spectral graph partitioning algorithm. 
We consider it as the state-of-the-art method.
Since c-spec can only cluster the vertices (and not the hyperedges), we build a hypergraph as mentioned above to cluster documents and then we construct another hypergraph in which we take words as vertices and documents as hyperedges to cluster words.  
Notice that of the  above mentioned methods only the proposed methods (s-spec-1 and s-spec-2) and bi-spec can co-cluster documents and words.

\begin{figure}
\centering
\includegraphics[scale=0.5]{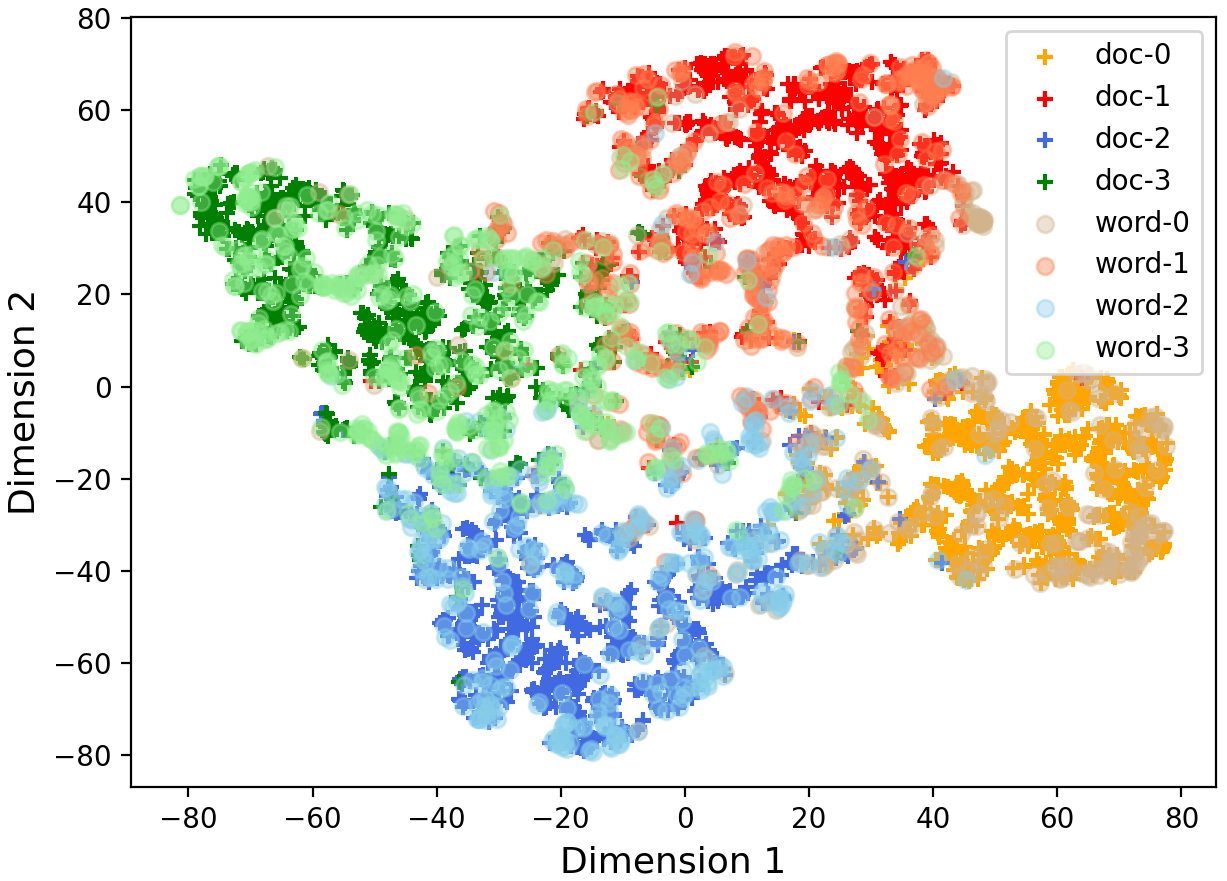}
\caption{The 2D t-SNE plot of document and word embeddings learned by s-spec-2 in Dataset 1. doc-$i$ and word-$i$ indicate documents and words from the four classes in the dataset.}
\vspace{-0.8em}
\label{fig-tsne}	
\end{figure}


To evaluate the clustering performance, we consider four metrics, namely, clustering accuracy score (ACC), normalized mutual information (NMI), weighted F1 score (F1), and adjusted Rand index (ARI)~\cite{emmons2016analysis}. 
For all of them, a larger value indicates a better performance. 
Notice that there are no ground-truth classes for words. 
Hence, following~\cite{ding2006orthogonal}, we consider the class conditional word distribution. 
More precisely, we compute the aggregate word distribution for each document class, then for every word we assign it to the class in which it has the highest probability in the aggregate distribution. 
We regard this assignment as the ground truth for performance evaluation. 

The numerical results (averaged over $10$ runs of $k$-means) are shown in Fig.~\ref{fig-acc}. 
We first notice that, of the proposed methods, s-spec-2 usually performs better than s-spec-1. 
This is in line with~\cite{ng2002spectral}, where it was observed that the lack of a normalization step (as in our s-spec-1) might lead to performance decays when the connectivity within each cluster varies substantially across clusters.
It can also be seen that the proposed methods and c-spec tend to work better than the naive method and the classical bipartite spectral graph partitioning method. 
This underscores the value of the hypergraph model considered. 
Importantly, s-spec-2 achieves similar clustering accuracy as the state-of-the-art c-spec for documents but tends to perform better in clustering words. 
Moreover, the proposed methods achieve small standard deviations, indicating their robustness to different centroid initializations in $k$-means.
	
\begin{figure}
\centering
\includegraphics[scale=0.47]{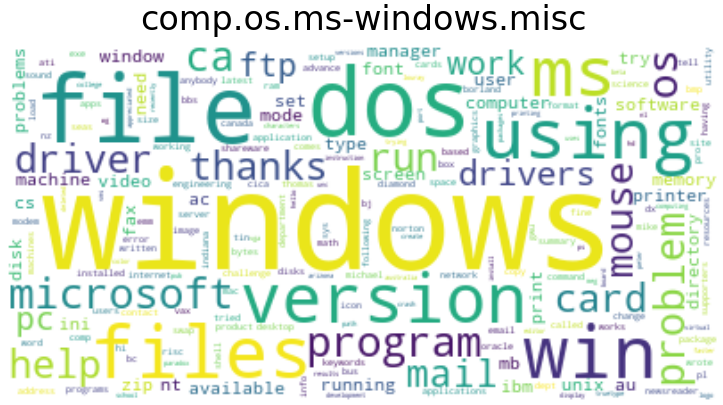}	
\includegraphics[scale=0.47]{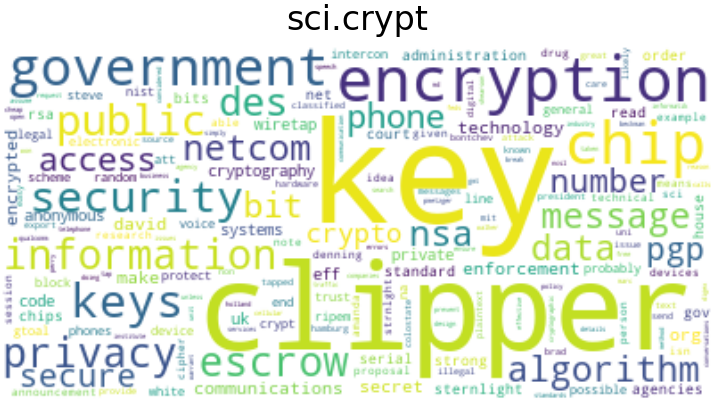}	
\caption{Word clouds for words predicted in the classes `comp.os.ms-windows.misc' and `sci.crypt'.}
\vspace{-0.8em}
\label{fig-wc}
\end{figure}	
	
Having showed the superiority in performance of s-spec-2, we now present visualizations of its application to Dataset 1 to further illustrate its effectiveness. 
In Fig.~\ref{fig-tsne}, we depict the embeddings of documents and words obtained by s-spec-2 by mapping them to a 2D space using t-SNE \cite{maaten2008visualizing}. 
We can see that documents and words in the same class appear to form groups. 
In Fig.~\ref{fig-wc}, we plot the word clouds\footnote{\url{https://github.com/amueller/word_cloud}} for the words predicted in the classes `comp.os.ms-windows.misc' (Microsoft Windows operating system) and `sci.crypt' (cryptography). 
The size of a word is determined by its frequency in the documents predicted in the same class, thus is able to reveal its importance in the class. 
We can see that the top words (such as windows, file, dos, ms in `comp.os.ms-windows.misc') align well with our intuitive understanding of the class topics.


\section{Conclusions}

We developed valid Laplacian matrices for hypergraphs with EDVWs, based on which we proposed spectral partitioning algorithms for co-clustering vertices and hyperedges. 
Through real-world text mining applications, we showcased the value of considering hypergraph models and demonstrated the effectiveness of our proposed methods.
Future research avenues include: 
(i) Developing alternative co-clustering methods where we replace the spectral clustering step by non-negative matrix tri-factorization algorithms~\cite{ding2006orthogonal, shang2012graph, wang2011fast} of matrices related to the hypergraph Laplacians. 
(ii) Generalizing additional existing digraph Laplacians \cite{li2010random, cucuringu2020hermitian} to the hypergraph case.
(iii) Study the use of the hypergraph model with EDVWs in other network analysis tasks such hypergraph alignment~\cite{zass2008probabilistic, tan2014mapping, mohammadi2016triangular}. 
Related to this last point, the fact that our proposed methods embed vertices and hyperedges in the same vector space (as shown in Fig.~\ref{fig-tsne}) facilitates the development of embedding-based hypergraph alignment algorithms~\cite{heimann2018regal}.

{\footnotesize\bibliography{star_expansion}}
\bibliographystyle{IEEEbib}

\end{document}